\begin{document}
\title{\Large{\textbf{Algebraic characterization of binary graphs}}}

\author{Tiziano Squartini\\
\emph{Instituut-Lorentz for Theoretical Physics, Leiden Institute of Physics,}\\
\emph{University of Leiden, Niels Bohrweg 2, 2333 CA Leiden (The Netherlands)}\\
\emph{E-mail: squartini@lorentz.leidenuniv.nl}
}

\date{\today}

\begin{abstract}
One of the fundamental concepts in the statistical mechanics field is that of ``ensemble''. Ensembles of graphs are collections of graphs, defined according to certain rules. The two most used ensembles in network theory are the microcanonical and the grandcanonical (whose definitions mimick the classical ones, originally proposed by Boltzmann and Gibbs), even if the latter is far more used than the former to carry on the analytical calculations. For binary (undirected or directed) networks, the grandcanonical ensemble is defined by considering all the graphs with the same number of vertices and a variable number of links, ranging from $0$ to the maximum: $N(N-1)/2$ for binary, undirected graphs and $N(N-1)$ for binary, directed graphs. Even if it is commonly used almost exclusively as a tool to calculate the average of some topological quantity of interest, its structure is so rich to deserve an analysis on its own. In this paper a logic-algebraic characterization of the grandcanonical ensemble of binary graphs is provided.
\end{abstract}

\maketitle

\section{Introduction}

The so-called complex networks theory is now an accepted branch of physics and, in pariticular, of statistical physics. In fact, networks are mainly studied by means of those tools developed in the traditional field of statistical mechanics and this has lead to the definition of concepts and methods formally analogue to those used there \cite{erdos,gilbert,bender,bollobas,barabaus,new,calda,cl,new2,likel}. One of the most successful concepts is the one of \emph{ensemble}, whose definition mimicks the traditional one: a set of networks each one having an associated probability coefficient, used to compute the average of some topological property of interest \cite{huang}. 

The less statistically-biased way to build an esemble is by means of the MaxEnt procedure, that is the maximization of the Shannon entropy under some imposed contraints \cite{shannon,jaynes}: a number of different ensembles can be defined by varying the number and the type of constraints. The main difficulty with this procedure is the calculation of the partition function for such ensembles, leading to a closed-form function: however, if linear constraints are chosen, a smart choice of the support greatly simplifies the calculations \cite{likel}. The far most common choice is to consider the set of all graphs with the same number of nodes (say $N$) and with a number of links varying from zero to the maximum. In what follows we will consider only binary networks, so that only two possibilities are given for each pair of nodes: they are connected (and there is one link between them) or not (and zero links are present between them). This implies that the maximum number of links is $N(N-1)/2$ for binary, undirected networks (the total number of undirected pairs, given $N$ objects) and $N(N-1)$ for binary, directed networks (the total number of directed pairs, given $N$ objects).

If we consider the total number of nodes' pairs as the available volume to store the given particles (that is, the directed or undirected links) we can identify the collection of graphs defined above as the usual \emph{grandcanonical ensemble} in physics, characterized by the three constraints $(V,\:\mu,\:T)$, where $V$ is the volume (the total number of nodes' pairs), $\mu$ is the chemical potential (controlling for the average number of particles/links) and $T$ is the temperature (controlling for the average value of the energy: a physical definiton of energy adaptable to networks - even if maybe unnecessary - is still missing but attempts to include it in the set of quantites imposable as networks contraints have already been made \cite{temp}). Clearly, as long as the volume of the system does not change and the number of particles changes, the only meaningful identification is with the grandcanonical ensemble (in fact, neither the microcanonical, or $(V,\:N,\:E)$, nor the canonical, or $(V,\:N,\:T)$, ensemble satisfy these requests \cite{jaynes,shannon}).

It turns out that allowing for the number of links to vary, greatly simplifies the calculations involving the partition function, thus making the grandcanonical ensemble a powerful tool to solve analytical models. However, despite the easiness in carrying on the calculations on this support, no efforts have been made to characterize it from a mathematical point of view. 

In the next sections, an aswer to this question will be provided (and with an abuse of language, we will call ``ensemble'' only the support of the distribution, disregarding the type of distribution defined over it).

\section{Explicit generation of the grandcanonical ensemble}

The explicit generation of the grandcanonical ensemble of binary networks can be realized by a tree-like process. Let us consider binary, undirected networks for simplicity. The pairs of nodes can be double-indexed (i.e. $ij$, with $i<j$ and ignoring, as usual in network theory, the diagonal elements) and listed according to a double-order: firstly, by ordering the index $i$ and, then, the index $j$. For example, with $N=4$ nodes we would have the following $N(N-1)/2=6$ pairs:

$$
ij=12,\:13,\:14,\:23,\:24,\:34.
$$

These undirected pairs will be the vertices of our tree. In $N(N-1)/2=6$ steps we will build all the branches of our tree, whose ``leaves'' will be the matrices constituting the grandcanonical ensemble. This procedure, even if long, makes the mechanism leading to the definition of this kind of ensemble clear. Note that the double index leads to the definition of a matrix (the adjacency matrix), usually indicated with $A$ and whose elements, the $a_{ij}$s, represent the pairs of nodes under consideration: $a_{ij}=1$ indicates that the vertices $i$ and $j$ are linked; $a_{ij}=0$ indicates that the vertices $i$ and $j$ are not linked.

Fig. \ref{tree} shows the first four (out of six) steps leading to the definition of the grandcanonical ensemble of binary networks with $N=4$. Let us start from the first pair of nodes and decide if the vertices $1$ and $2$ are linked or not ($a_{12}=1$ or $a_{12}=0$), the two choices corresponding to the first two branches. After the first pair, let us consider the next one, $ij=13$, and ask again if the nodes $1$ and $3$ are linked or not; these two possibilities correspond to the two pairs of branches drawn under the vertex $ij=13$ (and note the there are two vertices of this kind, corresponding to the previous, two different choices). So, the recipe is clear: it prescribes to draw two branches under each vertex (corresponding to 0 and 1) to link it to the following, until the last choice.

Proceeding in this way, we have explicitly listed all the possibilities, by assigning $0$ or $1$ to each adjacency matrix element and producing $2^{\frac{N(N-1)}{2}}=2^{6}=64$ binary, undirected matrices. Note that the empty graph, $E_{4}$, is produced by a series of $N=6$ zeros (on the left side of the tree) and that the complete graph, $K_{4}$, is produced by a series of $N=6$ ones (on the right side of the tree).

\begin{figure}[h!]
\includegraphics[scale=0.8]{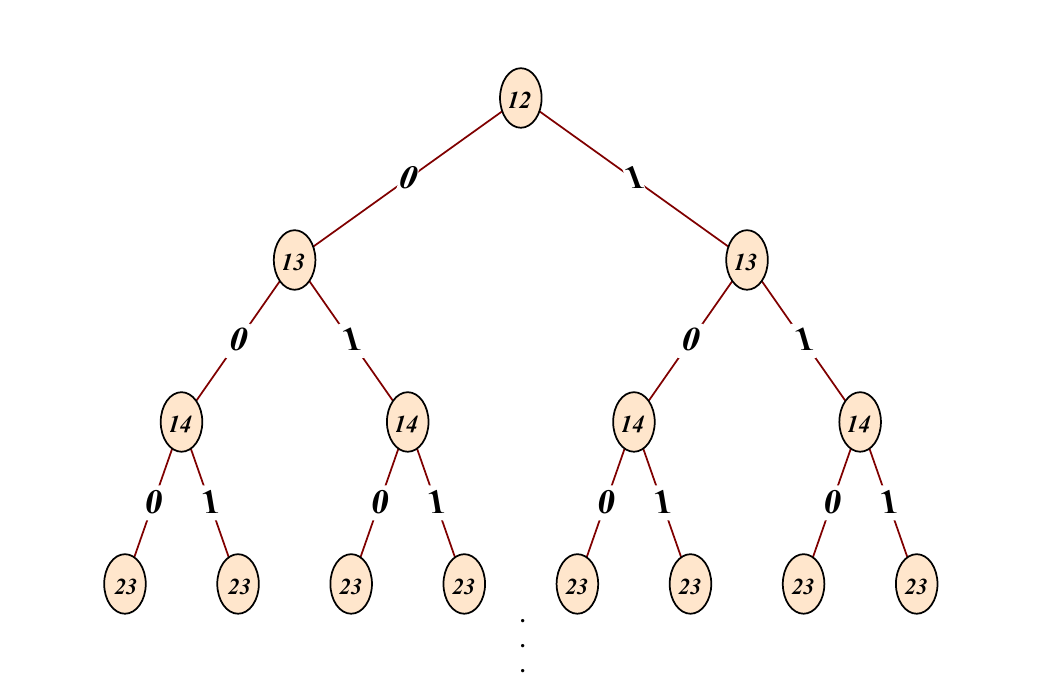}
\caption{The grandcanonical tree for binary, undirected networks.}
\label{tree}
\end{figure}

A worked example of this tree-generation is provived in fig. \ref{grand}. The same procedure can be applied to the binary, directed networks. In this case, the tree would have $2^{N(N-1)}$ leaves because also the ``reciprocal'' pairs should be considered and drawn, accordingly (so, each pair $ij$, with $i<j$, would be linked to the reciprocal, $ij$ with $i>j$, and not directly to $i+1\:j$, with $i+1<j$).

\begin{figure}[h!]
\includegraphics[scale=0.8]{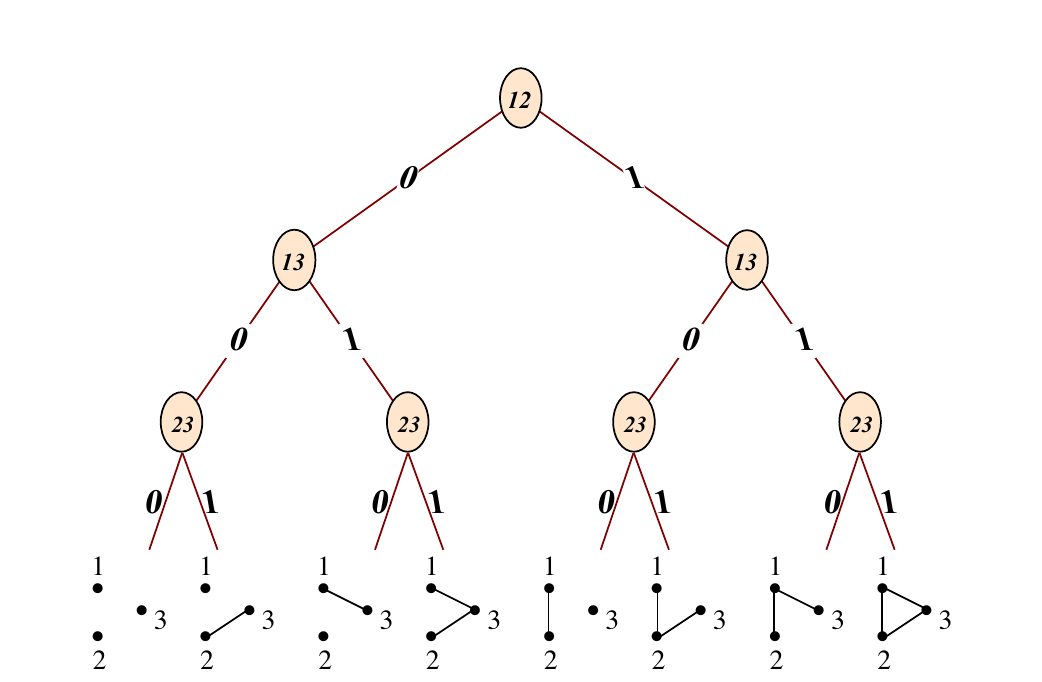}
\caption{The grandcanonical tree of binary, undirected networks with $N=3$ vertices. The leaves are the binary, undirected 
networks constituting the corresponding grandcanonical ensemble, whose cardinality is $2^{3}=8.$}
\label{grand}
\end{figure}

\section{The grandcanonical ensemble as a set of relations}

A binary graph is usually defined as a couple of sets, $V$ and $E$ where $V$ is the set of vertices and $E$ is the set of edges: when considering binary, undirected networks $E\subseteq[V]^2$ (the latter being the subset of the power set of $V$, whose elements have cardinality two); when considering binary, directed networks $E\subseteq V\times V$ (the latter being the cartesian product of $V$ with itself). These definitions can be unified, considering the definition of \emph{relation} \cite{DCM}. 

\newtheorem{rel}{Definition}
\begin{rel}
Given two sets, $S$ and $T$, a relation $\mathcal{R}$ is a subset of the cartesian product $S\times T$.
\end{rel}

Let us come back to the explicit generation of the grandcanonical ensemble. We could also ask, at every step: \emph{are $i$ and $j$ in relation?} By repeating this question at every step of the previous tree, and by answering ``yes'' (1 - and drawing a link between the corresponding vertices) or ``not'' (0 - and not drawing a link between the corresponding vertices), a graphical representation of the considered relation is obtained.

In fact, let us consider the cartesian product $V\times V$ and think about the link set, $E$: links are nothing but the pairs $(v_i,\:v_j)$ where $i, j=1\dots N$ and $v_{i}\in V,\:\forall\:i$. So, $a_{ij}=0$ indicates that the vertices $i$ and $j$ are not linked and the pair $(v_{i},\:v_{j})$ does not belong to $E$: $(v_{i},\:v_{j})\not\in E$; on the other side, $a_{ij}=1$ indicates that the vertices $i$ and $j$ are linked and the pair $(v_{i},\:v_{j})$ belongs to $E$: $(v_{i},\:v_{j})\in E$. In summary,

$$
(v_i\mathcal{R}v_j)\Leftrightarrow a_{ij}=1\:\mbox{and}\:(v_i\overline{\mathcal{R}}v_j)\Leftrightarrow a_{ij}=0,\:\forall\:i, j
$$

\noindent (where the line over the letter $\mathcal{R}$ indicates that the two elements are not related) showing that \emph{binary graphs are graphical representations of relations} \cite{DCM}. So, given $N$ vertices, the total number of binary relations (binary networks) definable over them is $2^{N^2}$. Even if the correspondence between graphs and relations is well-known, complex network theory can provide the tools to ``quantify'' the considered relation.

\subsection{Reflexivity}

\newtheorem{rel2}{Definition}
\begin{rel}
A relation is said to be reflexive when $(v_{i}\mathcal{R}v_{i}),\:\forall\:i$.
\end{rel}

To verify if a given relation is reflexive \cite{DCM}, we have to verify if every vertex is in relation with itself. This is easily done by considering the trace of the adjacency matrix $A$: $\mbox{Tr}(A)=\sum_{i=1}^N a_{ii}$. 

We have three cases: all the vertices are in relation with themselves, only some of them are in relation with themselves and no vertex is in relation with itself; the trace assumes three different values, respectively:

\begin{displaymath}
\left\{ \begin{array}{ll}
\mbox{Tr}(A)=0: & (v_{i}\overline{\mathcal{R}}v_{i}),\:\forall\:i;\:\:\:\mbox{\emph{antireflexive}}.\\
0<\mbox{Tr}(A)<N: & \exists\:i:\:(v_{i}\overline{\mathcal{R}}v_{i});\:\mbox{\emph{not-reflexive}}.\\
\mbox{Tr}(A)=N: & (v_{i}\mathcal{R}v_{i}),\:\forall\:i;\:\:\:\mbox{\emph{reflexive}}.
\end{array} \right.
\end{displaymath}

It is possibile to count how many relations (networks) are of the three kinds, above: there are $2^{N(N-1)}$ antireflexive relations, $2^{N(N-1)}$ reflexive relations and the remaining $2^{N^2}\left[1-\frac{1}{2^{N-1}}\right]$ are not-reflexive. If, as usual in network theory, the diagonal elements (or self-loops) are often ignored: this means that $a_{ii}=0,\:\forall\:i$ and only the $2^{N(N-1)}$ antireflexive relations are considered (exactle the cardinality of the grandcanonical ensemble of binary, directed networks).

\subsection{Symmetry}

\newtheorem{rel3}{Definition}
\begin{rel}
A relation is said to be symmetric when $(v_{i}\mathcal{R}v_{j})\Rightarrow (v_{j}\mathcal{R}v_{i}),\:\forall\:i,j$. A relation is said to be antisymmetric when $(v_{i}\mathcal{R}v_{j})\Rightarrow (v_{j}\overline{\mathcal{R}}v_{i}),\:\forall\:i,j$ with $i\neq j$.
\end{rel}

In other words, to verify the symmetry of a relation we have to verify if every link has its reciprocal (i.e. another link connecting the same two vertices but pointing in the opposite direction) \cite{DCM}. As for the reflexivity, a quantity exists to measure the degree of (a)symmetry of a given relation: the reciprocity \cite{new3},

$$
r=\frac{\sum_{i(\neq j)}\sum_{j}a_{ij}a_{ji}}{\sum_{i(\neq j)}\sum_{j}a_{ij}},
$$

\noindent (where every index runs from 1 to $N$) according to which

\begin{displaymath}
\left\{ \begin{array}{ll}
r=0: & (v_{i}\mathcal{R}v_{j})\Rightarrow (v_{j}\overline{\mathcal{R}}v_{i}),\:\forall\:i\neq j;\hfill\mbox{\emph{antisymm}}.\\
0<r<1: & \exists\:\:i,j,\:i\neq j:\:(v_{i}\mathcal{R}v_{j})\Rightarrow (v_{j}\overline{\mathcal{R}}v_{i});\:\mbox{\emph{asymm}}.\\
r=1: & (v_{i}\mathcal{R}v_{j})\Rightarrow (v_{j}\mathcal{R}v_{i}),\:\forall\:i, j;\hfill\mbox{\emph{symmetric}}.
\end{array} \right.
\end{displaymath}

Note that an empty network has all the adjacency matrix entries equal to zero. This implies that $r=0/0$, that is $r$ is an indeterminate form. In this case, the network is trivially both symmetric and antisymmetryc and the indeterminate form can be solved both as 0 or 1. Even if the relations considered in complex network theory are antireflexive, the definition of reciprocity explicitly excludes them frome the sums: this makes $r$ a valid index to measure the reciprocity of a generic relation (even in presence of its diagonal elements) because the diagonal of a relation has not influence on its symmetry \cite{DCM}. 

Now, it becomes clear why binary, undirected networks are particular cases of binary, directed networks: they represent symmetric relations where all links have their own reciprocal: $a_{ij}=a_{ji},\:\forall\:i\neq j$ or $A=A^T$.

It is possibile to count how many relations (networks) are of the three kinds, above: there are $2^{N}\cdot2^{\frac{N(N-1)}{2}}$ symmetric relations, $2^{N}\cdot3^{\frac{N(N-1)}{2}}$ antisymmetric relations (note that the $N$ diagonal elements are counted in both) and the remaining are asymmetric.
\newline
\newline
\indent So, the grandcanonical ensemble of binary, undirected networks (being a subset of the set of relations definable on $N$ vertices) is obtainable by simply requiring that the networks (relations) are undirected and without self-loops (that is, symmetric and antireflexive), restricting the number of graphs to be $2^{\frac{N(N-1)}{2}}$. On the other side, the grandcanonical ensemble of binary, directed networks is obtainable by simply requiring that the network (relations) are directed and without self-loops (that is, antireflexive), restricting the number of graphs to be $2^{N(N-1)}$.

\subsection{Transitivity}

\newtheorem{rel4}{Definition}
\begin{rel}
A relation is said to be transitive when $(v_{i}\mathcal{R}v_{j})\wedge(v_{j}\mathcal{R}v_{k})\:\Longrightarrow (v_{i}\mathcal{R}v_{k}),\:\forall\:i,j,k$.
\end{rel}

It is evident that the following, particular cases are included in the definition of transitivity \cite{DCM}:

$$
(v_{i}\mathcal{R}v_{j})\wedge(v_{j}\mathcal{R}v_{i})\:\Longrightarrow (v_{i}\mathcal{R}v_{i}),\:i\neq j,
$$
$$
(v_{j}\mathcal{R}v_{i})\wedge(v_{i}\mathcal{R}v_{j})\:\Longrightarrow (v_{j}\mathcal{R}v_{j}),\:i\neq j,
$$

\noindent implying that pairs of nodes mutually interacting should also interact with themselves, by means of a pair of self-loops. However if, as usual in network theory, antireflexive relations were considered, no vertex would interact with itself thus making a symmetric relation automatically not-transitive. To avoid this, we can restrict the definition of trasitivity to the triples of distinct vertices: $i, j, k$ with $i\neq j,\:j\neq k,\:i\neq k$. In this case a quantity can be defined to measure the degree of transitivity of the considered relation \cite{giorgio}, as shown in fig. \ref{clustdir}:

$$
c_{i}^{mid}=\frac{\sum_{k(\neq j)}\sum_{j}a_{ji}a_{ik}a_{jk}}{k_i^{in}k_i^{out}-k_{i}^{\leftrightarrow}}
$$

\noindent (where the indexes run from 1 to $N$). The symbols above refer to standard quantities in network theory: the \emph{out-degree} of node $i$, $k_{i}^{out}=\sum_{j(\neq i)} a_{ij}$, that is the number of outgoing links from $i$; the \emph{in-degree} of node $i$, $k_{i}^{in}=\sum_{j(\neq i)} a_{ji}$, that is the number of incoming links to $i$; the \emph{reciprocal degree} of node $i$, $k_{i}^{\leftrightarrow}=\sum_{j(\neq i)}a_{ij}a_{ji}$, that is the number of links outgoing from $i$ and having a reciprocal partner. In the case of a symmetryc relation, $A=A^T$ and $k_{i}^{in}=k_{i}^{out}=k_{i}^{\leftrightarrow}\equiv k_{i}$, the last symbol defining the \emph{degree of a node}, that is the number of its neighbors; so, in this case,

$$
c_{i}^{mid}=\frac{\sum_{k(\neq j)}\sum_{j}a_{ji}a_{ik}a_{jk}}{k_i^{in}k_i^{out}-k_{i}^{\leftrightarrow}}=\frac{\sum_{k(\neq j)}\sum_ja_{ij}a_{jk}a_{ki}}{k_i (k_i-1)}\equiv c_{i}
$$

\noindent and we recover the usual definition of the \emph{undirected clustering coefficient} \cite{holl}.

\begin{figure}[h!]
\includegraphics[scale=0.2]{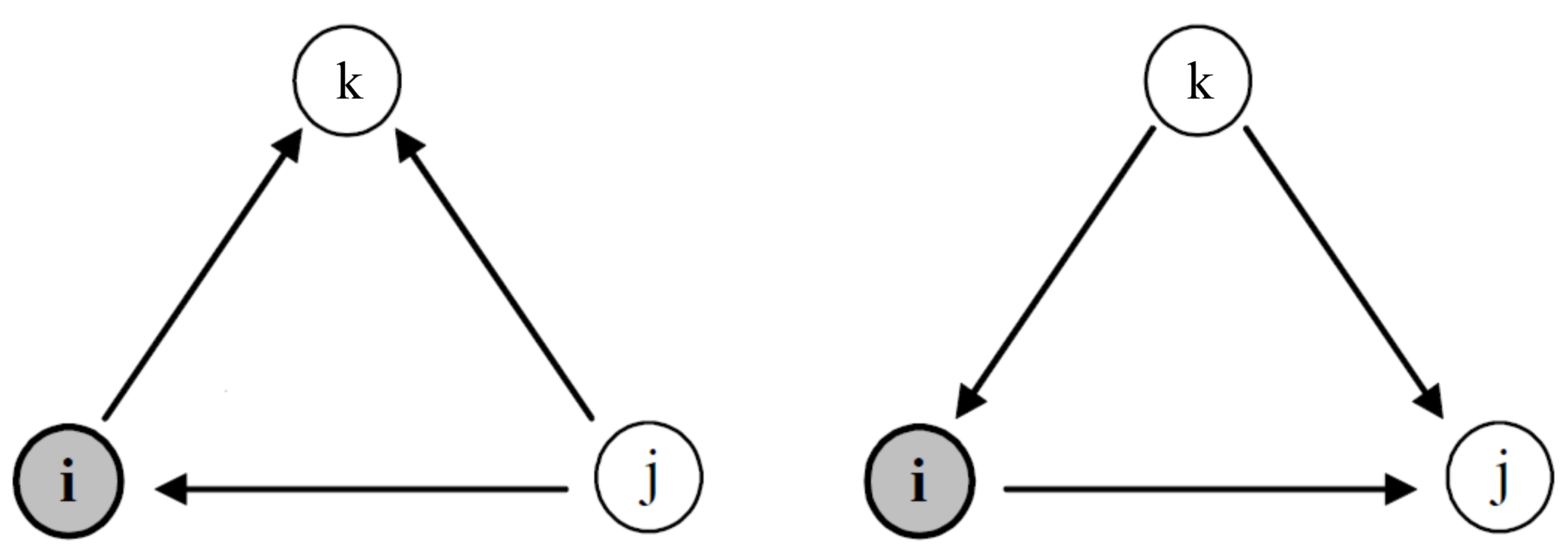}
\caption{Two examples of the paths measured by $c_{mid}.$ The grey vertex, $i$, is assumed as the point of view from which the measures are taken.}
\label{clustdir}
\end{figure}

To understand how it works, let us consider a symmetric relation (that is, a binary, undirected network). The degree of a node is the number of its neighbors: so, $\frac{k_i(k_i-2)}{2}$ is the maximum number of $i$'s neighbors' pairs that could be connected or, equivalently, the maximum number of triangles having $i$ as a vertex. Now, an undirected triangle is a graphical representation of a transitive relation: since the clustering coefficient counts the fraction of triangles effectively completed, it is also a measure of the fraction of transitive relations effectively realized. 

For binary, directed networks the only difference lies in what follows: because of the directionality of the links, the order of the considered vertices is not irrelevant. $c_{mid}$ counts the number of transitive relations by taking the point of view of the central (``middle'') vertex of the triangle (see fig. \ref{trans} and table \ref{tabtrans} for a visual example). As with the trace and the reciprocity we have three cases:

\begin{displaymath}
\left\{ \begin{array}{ll}
c_{i}^{mid}=0,\:\forall\:i: & \mbox{\emph{antitransitive}}.\\
\exists\:i:\:0<c_{i}^{mid}<1: & \mbox{\emph{intransitive}}.\\
c_{i}^{mid}=1,\:\forall\:i: & \mbox{\emph{transitive}}.
\end{array} \right.
\end{displaymath}

If the denominator is zero for some vertex, there are no relations to realize. In the binary, undirected case, this amounts to consider isolated nodes. In the directed case, the evenience of having a vanishing denominator is more subtle (see fig. \ref{trans}). In both these cases, we would have the indeterminate form $c_i^{mid}=0/0$, for some $i$. As for the reciprocity, these can be solved both as 0 and 1.

\begin{figure}[h!]
\includegraphics[scale=0.29]{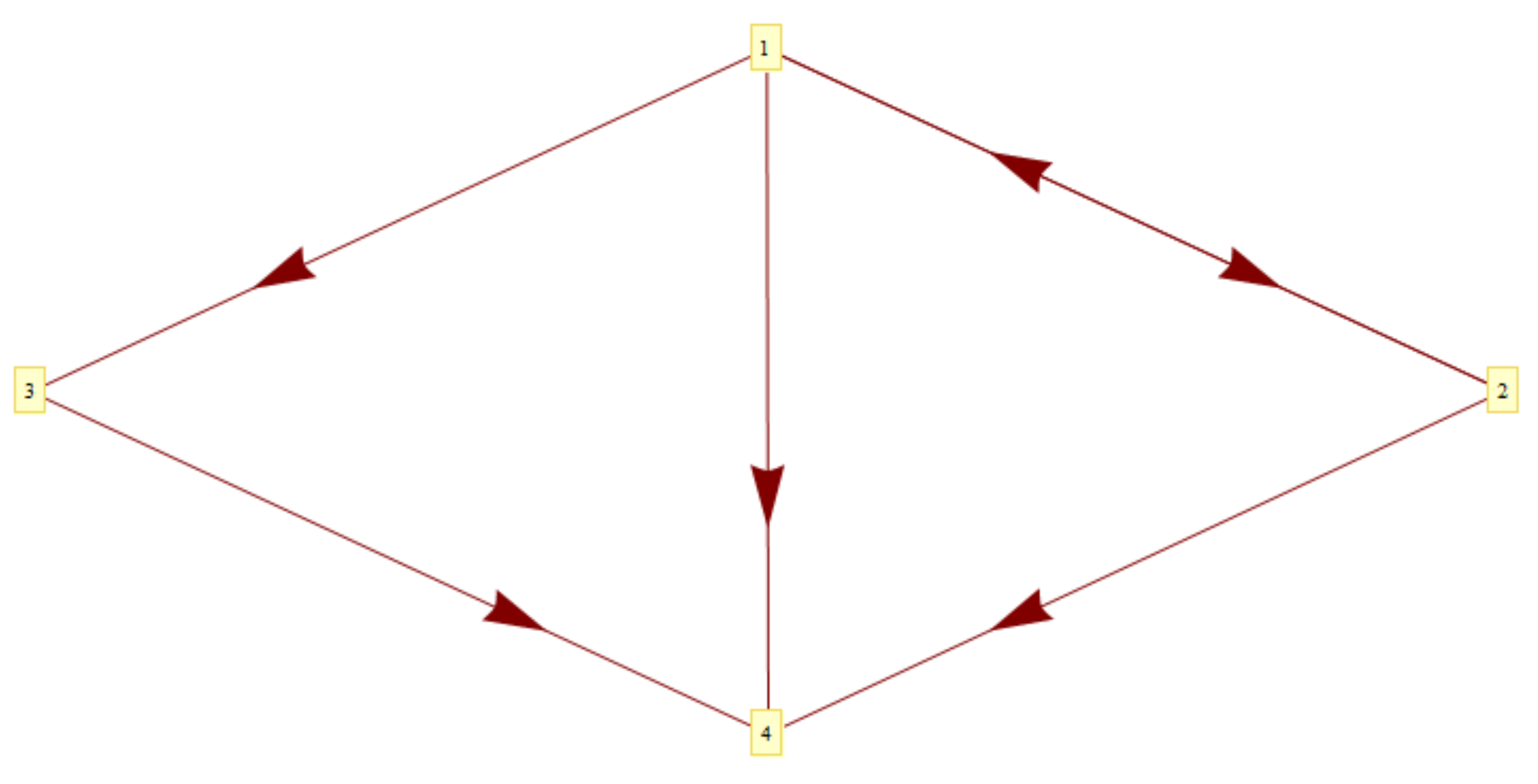}
\caption{A binary, directed network with $N=4$ shows the effectiveness of the index $c_{mid}$ in measuring the transitivity of the considered relation. Vertex 1 completes only one of the two potential transitive relations of which it is a vertex: the triangle 214. The triangle 213 is not completed. In fact, $v_2\mathcal{R}v_1 \wedge v_1\mathcal{R}v_4\Longrightarrow v_2\mathcal{R}v_4$; on the other side, $v_2\mathcal{R}v_1 \wedge v_1\mathcal{R}v_3$ but $v_2\overline{\mathcal{R}}v_3$. Vertex $v_4$ does not partecipate to any potential transitive relation. The considered relation is intransitive, because of vertex 1.}
\label{trans}
\end{figure}

\begin{table}[h!]
\centering
\begin{tabular}{|c|c|c|c|c|}
\hline
& $k_{i}^{out}$ & $k_{i}^{in}$ & $k_{i}^{\leftrightarrow}$ & $c_{i}^{mid}$\\
\hline
\hline
$v_1$ & 3 & 1 & 1 & 1/2\\
\hline
$v_2$ & 2 & 1 & 1 & 1\\
\hline
$v_3$ & 1 & 1 & 0 & 1\\
\hline
$v_4$ & 0 & 3 & 0 & $0/0$\\
\hline
\end{tabular}
\caption{Degree sequences and $c_{mid}$ of the four vertices forming the binary, directed network in fig. \ref{trans}.}
\label{tabtrans}
\end{table}

If the considered relation is symmetric, then the antitransitive request (a binary, undirected network without cycles) defines a \emph{tree}.

\section{The grandcanonical ensemble as a boolean algebra}

The adjacency matrix elements, $a_{ij}$s, can assume only the values $0$ and $1$. In the previous sections they were considered as indicators of some kind of relation. In this section, on the other side, they will be considered as \emph{boolean}, \emph{logic variables} \cite{TAGT}. So a binary, undirected network is a collection of $\frac{N(N-1)}{2}$ boolean variables (and $N(N-1)$ for a binary, directed network).

The grandcanonical ensemble of binary networks was obtained by the tree generation process: let us call it $\mathcal{G}^{B}_{N}$ to indicate the presence of $N$ nodes and only binary ($B$) relations. Until now, we have characterized the single graphs in $\mathcal{G}^{B}_{N}$ as relations. Now, let us characterize the ensemble $\mathcal{G}_N^B$ as a whole. 

\newtheorem{rel5}{Definition}
\begin{rel}
A boolean algebra is a sextuple $<\mathcal{B}, \cap, \cup, \neg, 0, 1>$, consisting in a set $\mathcal{B}$, equipped with two binary, internal operation, $\cap$ and $\cup$, and an unary operation, $\neg$, and two elements, $0$ and $1$, belonging to $\mathcal{B}$, such that the following axioms hold for all the elements, $b$, in $\mathcal{B}$: 1) $\cap$ and $\cup$ are associative, 2) $\cap$ and $\cup$ are commutative, 3) $\cap$ and $\cup$ are mutually distributive, 4) $\cap$ and $\cup$ have their own neutral elements, $1$ and $0$, such that $b\cap 1=b$ and $b\cup 0=b$, 5) each element in $\mathcal{B}$ has its own complement, $\neg b$, such that $b\cap\neg b=0$ and $b\cup\neg b=1$.
\end{rel}

Let us consider the following \emph{binary} operation \cite{ieee,CGT,GT,IGT,bb}

$$
\cap:\mathcal{G}^{B}_{N}\times\mathcal{G}^{B}_{N}
$$

\noindent known as \emph{intersection of graphs}, such that

$$
G_{1}\cap G_{2}\equiv(V_{1}\cap V_{2},\:E_{1}\cap E_{2}),\:\forall\:G_1,\:G_2\in\mathcal{G}^{B}_{N}.
$$

Now, since we have $V_{1}=V_{2}\equiv V$ the intersection becomes $G_{1}\cap G_{2}=(V,\:E_{1}\cap E_{2})$, acting as the intersection between the two sets of links \cite{CGT,GT,IGT,bb}. This operation has to be understood as acting between the pairs of nodes indexed by the same numbers, $i$ and $j$, when different graphs are considered. So, the intersection of graphs is, actually, the result of the $\frac{N(N-1)}{2}$ (or $N(N-1)$) intersections between corresponding pairs. So, given the pair of nodes $ij$ and two different graphs, $G_1$ and $G_2$, the elements $a_{ij}^1\in E_1$ and $a_{ij}^2\in E_2$ intersect as usual, binary variables as shown in table \ref{tab1} (see fig. \ref{intersect} for an example).

\begin{table}[h!]
\centering
\begin{tabular}{|c|c|c|}
\hline
$a_{ij}^{1}\in E_1$ & $a_{ij}^{2}\in E_2$ & $a_{ij}^{1}\cap a_{ij}^{2}\in E_1\cap E_2$ \\
\hline
\hline
1 & 1 & 1 \\
\hline
1 & 0 & 0 \\
\hline
0 & 1 & 0 \\
\hline
0 & 0 & 0 \\
\hline
\end{tabular}
\caption{Table of truth for the intersection of the boolean elements of the adjacency matrix $A$.}
\label{tab1}
\end{table}

\begin{figure}[h!]
\includegraphics[scale=0.3]{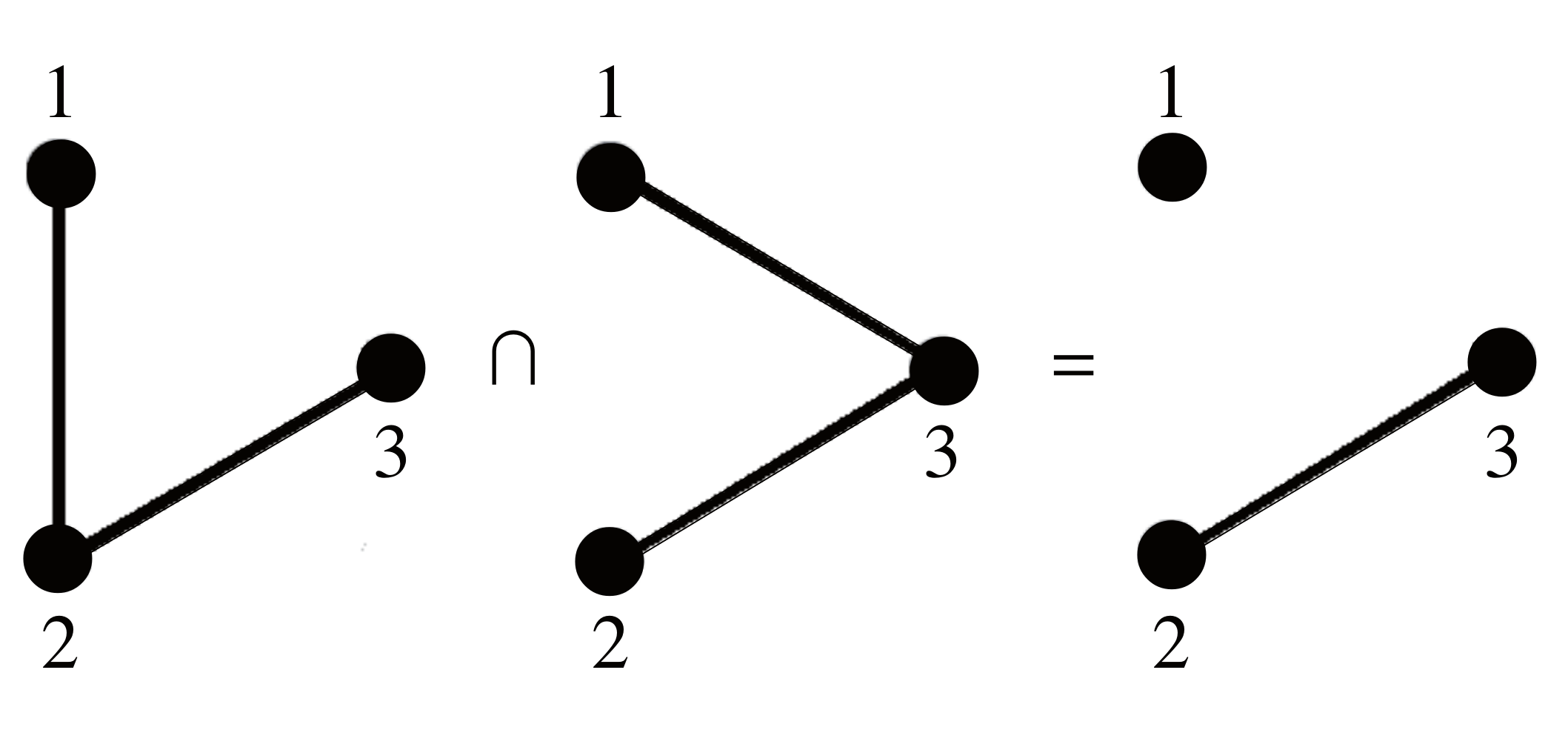}
\caption{Example of intersection of graphs, according to the table of truth \ref{tab1}.}
\label{intersect}
\end{figure}

Now, let us consider a second \emph{binary} operation \cite{CGT,GT,IGT,bb}:

$$
\cup:\mathcal{G}^{B}_{N}\times\mathcal{G}^{B}_{N}
$$

\noindent known as \emph{union of graphs}:

$$
G_{1}\cup G_{2}\equiv(V_{1}\cup V_{2},\:E_{1}\cup E_{2}),\:\forall\:G_1,\:G_2\in\mathcal{G}^{B}_{N}
$$

\noindent and since we have $V_{1}=V_{2}\equiv V$, the intersection becomes $G_{1}\cup G_{2}=(V,\:E_{1}\cup E_{2})$ acting as the union of the two sets of links \cite{CGT,GT,IGT,bb}. Also this operation is defined as acting between corresponding pairs of nodes: given the pair $ij$ and two different graphs, $G_1$ and $G_2$, the union between the elements $a_{ij}^1\in E_1$ and $a_{ij}^2\in E_2$ works as shown in table \ref{tab2} (see fig. \ref{union} for an example).

\begin{table}[h]
\centering
\begin{tabular}{|c|c|c|}
\hline
$a_{ij}^{1}\in E_1$ & $a_{ij}^{2}\in E_2$ & $a_{ij}^{1}\cup a_{ij}^{2}\in E_1\cup E_2$ \\
\hline
\hline
1 & 1 & 1 \\
\hline
1 & 0 & 1 \\
\hline
0 & 1 & 1 \\
\hline
0 & 0 & 0 \\
\hline
\end{tabular}
\caption{Table of truth for the union of the boolean elements of the adjacency matrix $A$.}
\label{tab2}
\end{table}

\begin{figure}[h!]
\includegraphics[scale=0.3]{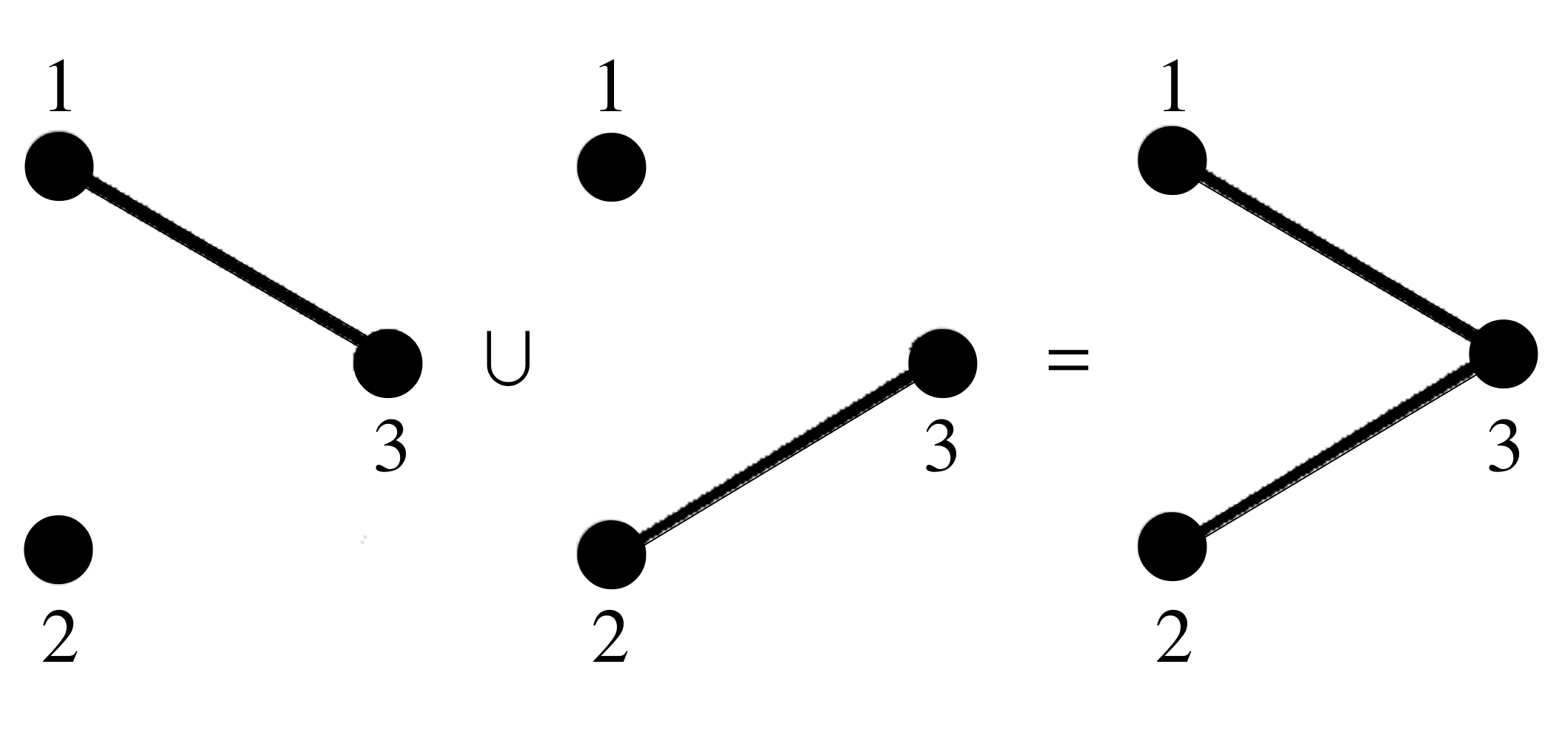}
\caption{Example of union of graphs, according to the table of truth \ref{tab2}.}
\label{union}
\end{figure}

The last operation we consider is an \emph{unary} operation acting on single graphs, the \emph{complement of a graph} \cite{CGT,GT,IGT,bb}:

$$
\neg:\mathcal{G}_{N}^{B},
$$

\noindent such that $\neg G\equiv(V,\:\neg E),\:\forall\:G\in\mathcal{G}^{B}_{N}$. The complement of the link set can be defined as the result of the complement operation on the single adjacency matrix elements, as shown by the rules in table \ref{tab3} (see fig. \ref{neg} for an example).

\begin{table}[h!]
\centering
\begin{tabular}{|c|c|}
\hline
$a_{ij}\in G$ & $\neg a_{ij}\in \neg G$ \\
\hline
\hline
1 & 0 \\
\hline
0 & 1 \\
\hline
\end{tabular}
\caption{Table of truth for the complement of the boolean elements of the adjacency matrix $A$.}
\label{tab3}
\end{table}

\begin{figure}[h!]
\includegraphics[scale=0.3]{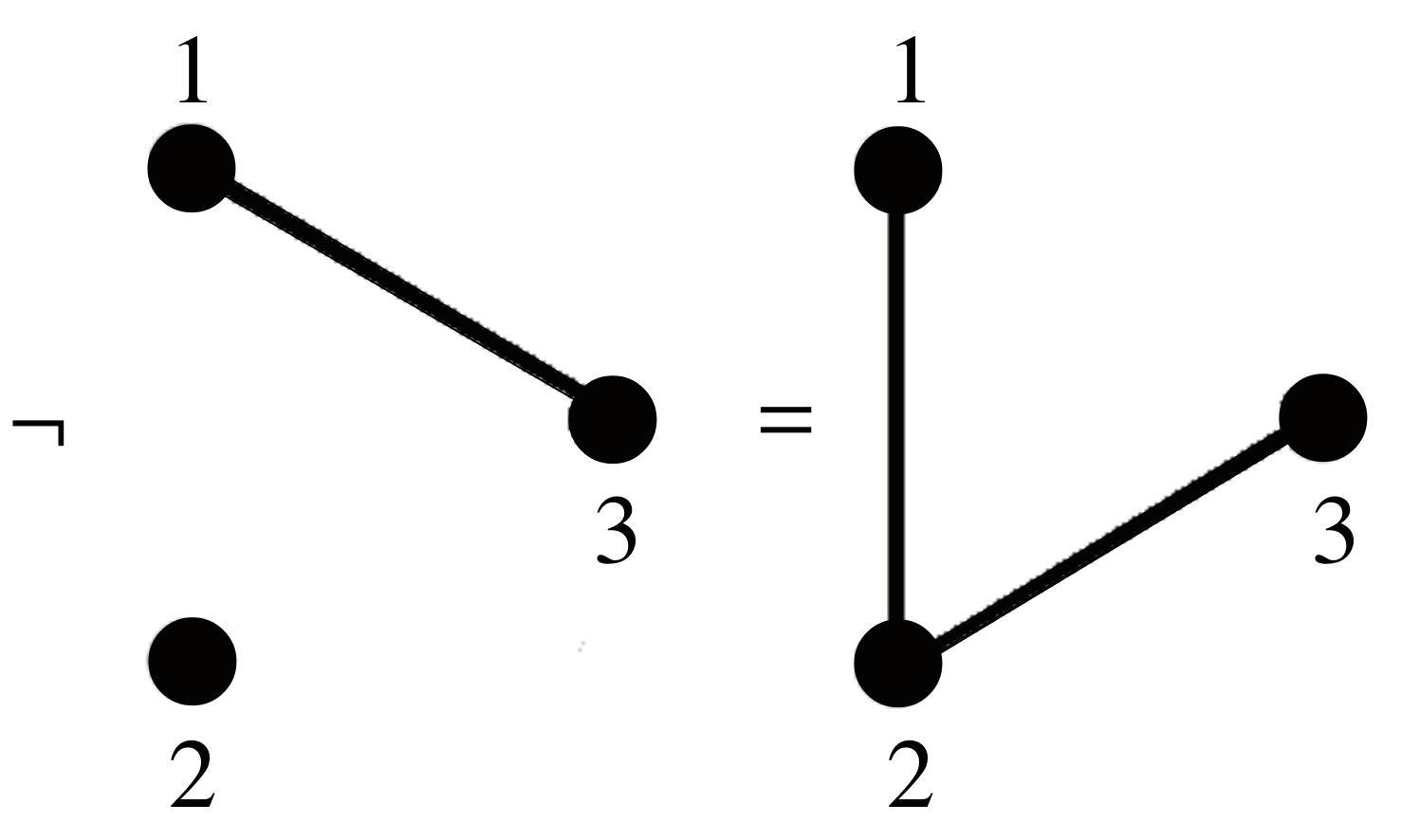}
\caption{Example of the complement of a graph, according to the table of truth \ref{tab3}.}
\label{neg}
\end{figure}

\newtheorem{thm}{Theorem}
\begin{thm}
The grandcanonical ensemble $\mathcal{G}_N^B$ of binary networks (undirected or directed), equipped with the operations od intersection, union and complement of a graph, is a boolean algebra.
\end{thm}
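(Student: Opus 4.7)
The plan is to reduce every axiom from the level of graphs to the level of single adjacency matrix entries. Because the three operations $\cap$, $\cup$ and $\neg$ were all defined pair-by-pair via the truth tables \ref{tab1}--\ref{tab3}, any algebraic identity will hold for the whole graph as soon as it holds for every entry $a_{ij}\in\{0,1\}$. And on the two-element set $\{0,1\}$ these truth tables are exactly those of the canonical Boolean algebra, whose axioms are elementary and standard. So the entire argument is a "lift" from the two-element algebra to its product over the $N(N-1)/2$ (resp.\ $N(N-1)$) undirected (resp.\ directed) pairs of nodes.

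First I would single out the distinguished elements playing the roles of $0$ and $1$: the empty graph $E_N$ (all $a_{ij}=0$, the leftmost leaf of the generating tree in Sec.\ II) and the complete graph $K_N$ (all $a_{ij}=1$, the rightmost leaf). Both belong to $\mathcal{G}_N^B$ by construction. Next I would verify closure: the truth tables output values in $\{0,1\}$, so $G_1\cap G_2$, $G_1\cup G_2$ and $\neg G$ are again binary graphs on the same vertex set $V$; in the undirected case, since $a_{ij}^1=a_{ji}^1$ and $a_{ij}^2=a_{ji}^2$, the componentwise operations preserve the equality $a_{ij}=a_{ji}$, so symmetry is preserved. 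Hence all outputs lie in $\mathcal{G}_N^B$.

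The heart of the proof is the verification of the five axioms in Def.\ 5. For every pair $ij$, associativity reads $(a_{ij}^1\cap a_{ij}^2)\cap a_{ij}^3=a_{ij}^1\cap(a_{ij}^2\cap a_{ij}^3)$, which is immediate from table \ref{tab1}, and analogously for $\cup$; commutativity of both operations is visible by the symmetry of the truth tables under swapping the two columns; mutual distributivity $a_{ij}^1\cap(a_{ij}^2\cup a_{ij}^3)=(a_{ij}^1\cap a_{ij}^2)\cup(a_{ij}^1\cap a_{ij}^3)$ (and the dual) follows by inspection of the eight possible triples; the neutral elements act as $a_{ij}\cap 1=a_{ij}$ and $a_{ij}\cup 0=a_{ij}$, which at the graph level means $G\cap K_N=G$ and $G\cup E_N=G$; finally the complement laws $a_{ij}\cap\neg a_{ij}=0$ and $a_{ij}\cup\neg a_{ij}=1$ give $G\cap\neg G=E_N$ and $G\cup\neg G=K_N$. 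Since each of these identities holds entry by entry, it lifts immediately to the corresponding identity among graphs.

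The main obstacle is not computational but definitional: one must be careful about what "$\mathcal{G}_N^B$" really contains. Intersection and union preserve antireflexivity automatically, but $\neg$ maps $a_{ii}=0$ to $a_{ii}=1$, which would introduce self-loops and kick the result out of the grandcanonical ensemble. The cleanest fix is to restrict the componentwise operations to off-diagonal pairs (the only ones indexing elements of $\mathcal{G}_N^B$), so that $1\equiv K_N$ is the complete graph without self-loops and $\neg$ acts only on the $N(N-1)/2$ (or $N(N-1)$) off-diagonal entries. With this convention closure is restored, the distinguished elements $E_N,K_N$ still sit inside $\mathcal{G}_N^B$, and the axioms above go through unchanged, establishing that $\langle \mathcal{G}_N^B,\cap,\cup,\neg,E_N,K_N\rangle$ is a Boolean algebra.
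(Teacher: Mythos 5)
Your proof is correct and follows essentially the same route as the paper's: both arguments reduce every axiom to the componentwise truth tables on $\{0,1\}$ and identify $K_N$ and $E_N$ as the neutral elements of $\cap$ and $\cup$ respectively. You are somewhat more careful than the paper on two definitional points --- that the componentwise operations preserve the symmetry $a_{ij}=a_{ji}$ in the undirected case, and that $\neg$ must act only on the off-diagonal entries so as not to create self-loops --- but these are refinements of, not departures from, the paper's argument, which leaves both points implicit by indexing everything over the $N(N-1)/2$ (or $N(N-1)$) off-diagonal pairs from the outset.
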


\begin{proof}
It is easy to verify that the intersection of graphs is an \emph{internal} operation, that is $\cap:\mathcal{G}^{B}_{N}\times\mathcal{G}^{B}_{N}\longrightarrow\mathcal{G}^{B}_{N}$. In fact, being the results of the intersection between corresponding pairs of nodes only $0$ or $1$, it represents a binary element whose contribution to the resulting graph is automatically accounted for in the tree generation. The same is valid for the union of graphs: $\cup:\mathcal{G}^{B}_{N}\times\mathcal{G}^{B}_{N}\longrightarrow\mathcal{G}^{B}_{N}$. The complement of a graph is an \emph{internal} operation, too: $\neg:\mathcal{G}_{N}^{B}\longrightarrow \mathcal{G}_{N}^{B}$. This is even more evident than for the precedent operations: it is sufficient to consider that, at every step of the tree generation, a choice between $0$ and its complement, $1$, has to be made. So, the complement of every choice is accounted for by this process.

The neutral element for the intersection is the complete graph with $N$ vertices, $K_{N}$: in facts, $G\cap K_{N}=K_{N}\cap G=G,\:\forall\:G\in\mathcal{G}^{B}_{N}$. In fact, all the adjacency matrix elements of $K_N$ are 1, so $1\cap 0=0$ and $1\cap 1=1$. 

The neutral element of the union is the empty graph with $N$ vertices, $E_{N}$: $G\cup E_{N}=E_{N}\cup G=G,\:\forall\:G\in\mathcal{G}_{N}^{B}$. In fact, all the adjacency matrix elements of $E_N$ are 0, so $0\cup 0=0$ and $0\cup 1=1$. 

Note that

$$
G\cap\neg G=E_{N},\:G\cup\neg G=K_{N},\forall\:G\in\mathcal{G}_{N}^{B}
$$

\noindent because these operations reduce to operate between pairs of nodes and it is evident that $1\cup\neg 1=1$, $0\cup\neg 0=1$, $1\cap\neg 1=0$ and $0\cap\neg 0=0$.

These operations have also the other requested properties. When computed on a single pair of variables, the intersection and the union are commutative, associative, mutually distributive \cite{CGT,GT,IGT,bb}. When dealing with graphs, all the pairs have to be considered independently from each other. So these propeties are preserved also for graphs: the intersection and the union are commutative

$$
G_1\cap G_2=G_2\cap G_1,\:G_1\cup G_2=G_2\cup G_1,\:\forall\:G_1,\:G_2\in\mathcal{G}^{B}_{N},
$$

\noindent associative

$$
G_1\cap (G_2\cap G_3)=(G_1\cap G_2)\cap G_3,\:\forall\:G_1,\:G_2\in\mathcal{G}^{B}_{N},
$$
$$
G_1\cup (G_2\cup G_3)=(G_1\cup G_2)\cup G_3,\:\forall\:G_1,\:G_2\in\mathcal{G}^{B}_{N}
$$

\noindent and mutually distributive

$$
G_1\cap (G_2\cup G_3)=(G_1\cap G_2)\cup(G_1\cap G_3),\:\forall\:G_1, G_2, G_3\in\mathcal{G}^{B}_{N},
$$
$$
G_1\cup (G_2\cap G_3)=(G_1\cup G_2)\cap(G_1\cup G_3),\:\forall\:G_1, G_2, G_3\in\mathcal{G}^{B}_{N}.
$$
\end{proof}

So, by enriching $\mathcal{G}_{N}^{B}$ with these three operations we have a \emph{boolean algebra of binary, undirected (or directed) graphs}, having $N$ vertices each:

\begin{displaymath}
\mathsf{G^{B}_N}\equiv<\mathcal{G}_{N}^B, \cap, \cup, \neg, K_{N}, E_{N}>.
\end{displaymath}

We have made no distinction between undirected and directed graphs during the proofs because they can be treated in exactly the same way. So we can end up with a boolean algebra of binary undirected or directed graphs, depending on the type of network chosen.

\section{The grandcanonical ensemble as a sigma-algebra}

From the operations defined above, we can consider every generic graph $G$ in $\mathcal{G}_{N}^B$ as a ``subset'' of $K_N$, in the sense that $E \subseteq E_{K_N},\:\forall\:G\in\mathcal{G}_{N}^B$, where $G=(V,\:E)$, that is, the link set of every graph is a subset of the link set of the complete graph. In fact, by using the intersection operation

$$
G\cap K_{N}=G,\:\forall\:G\in\mathcal{G}_{N}^B.
$$

\newtheorem{thm2}{Theorem}
\begin{thm}
The grandcanonical ensemble $\mathcal{G}_{N}^B$ is the sigma-algebra generated by the subsets of the complete graph $K_N$ (its power-set): the so-called discrete sigma-algebra, whose cardinality is $2^{\frac{N(N-1)}{2}}$ for binary, undirected networks and $2^{N(N-1)}$ for binary, directed networks.
\end{thm}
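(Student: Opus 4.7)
The plan is to translate the statement into a statement about subsets of a finite ground set, namely the edge set $E_{K_N}$ of the complete graph, and then use the boolean-algebra structure already proved in Theorem 1 to promote the collection to a sigma-algebra. The key observation is that, because the vertex set $V$ is fixed and common to every $G\in\mathcal{G}_N^B$, each graph is fully determined by its edge set $E\subseteq E_{K_N}$. Conversely, the tree-generation argument of Section 2 shows that for every subset $S\subseteq E_{K_N}$ there is exactly one branch of the tree that yields the graph $(V,S)$. This produces a bijection $\mathcal{G}_N^B\leftrightarrow 2^{E_{K_N}}$.

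Next I would check that this bijection is an isomorphism of algebraic structures, i.e. that graph-intersection, graph-union, and graph-complement (restricted to edges, since $V$ is fixed) correspond exactly to set-theoretic intersection, union, and complement with respect to the ambient set $E_{K_N}$. This is immediate from the truth tables in Section 4 (tables \ref{tab1}, \ref{tab2}, \ref{tab3}): pair by pair, each edge behaves as the characteristic function of a subset of $E_{K_N}$. The neutral elements $K_N$ and $E_N$ correspond to the full set $E_{K_N}$ and to the empty set $\varnothing$, respectively, which is exactly what one needs.

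With the bijection and the operational correspondence in hand, the sigma-algebra axioms follow at once. A sigma-algebra on a set $X$ must contain $X$, be closed under complementation, and be closed under countable unions. Here $X=E_{K_N}$ is finite, so any countable union reduces to a finite union, and Theorem 1 already guarantees closure under binary (hence finite) unions and under complement, and contains both $E_{K_N}$ (as $K_N$) and $\varnothing$ (as $E_N$). Moreover, the bijection shows the collection exhausts $2^{E_{K_N}}$, so it is the discrete sigma-algebra. The cardinalities $2^{N(N-1)/2}$ and $2^{N(N-1)}$ then come from $|2^{E_{K_N}}|=2^{|E_{K_N}|}$, with $|E_{K_N}|=N(N-1)/2$ in the undirected case and $|E_{K_N}|=N(N-1)$ in the directed, antireflexive case.

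I do not anticipate a genuinely hard step; the only point that requires a little care is the identification of $K_N$ with the ground set on which the sigma-algebra lives. Technically $K_N$ is a graph, so the phrase ``subsets of $K_N$'' must be interpreted as ``subgraphs of $K_N$ on the same vertex set,'' equivalently ``subsets of $E_{K_N}$''; once this identification is made explicit (using $G\cap K_N=G$ to justify that every $G$ embeds as such a subset), the rest of the proof reduces to invoking Theorem 1 and the elementary fact that the power-set of a finite set is a sigma-algebra.
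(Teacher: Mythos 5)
Your proposal is correct and follows essentially the same route as the paper: verify that $K_N$ belongs to the ensemble, that the ensemble is closed under complementation, and that countable unions reduce to finite unions by finiteness, all by invoking the internal operations established in Theorem 1. Your explicit bijection $\mathcal{G}_N^B\leftrightarrow 2^{E_{K_N}}$ is an addition the paper's proof omits, and it is in fact what justifies the remaining claims in the statement --- that the ensemble is the \emph{discrete} sigma-algebra (i.e.\ exhausts the power set of $E_{K_N}$) and that its cardinality is $2^{N(N-1)/2}$ or $2^{N(N-1)}$.
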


\begin{proof}
In the first place, $K_{N}\in\mathcal{G}_{N}^B$: this is evident from the tree generation process.

In the second place, $G\in\mathcal{G}_{N}^B\Longrightarrow\neg G\in\mathcal{G}_{N}^B$: in fact, every graph has its complement inside the grandcanonical ensemble.

In the third place, $G_i\in\mathcal{G}_{N}^B,\:\forall\:i=1\dots n\Longrightarrow\bigcup_{i=1}^n G_i\in\mathcal{G}_{N}^B$, that is: the union of a finite familiy of graphs, belonging to the grandcanonical ensemble, still belongs to it. This can be verified by taking, first, the union of two graphs (which is an internal operation, as seen before) and then considering a third one and so on. Of course, no infinite union can be considered, because the cardinality of the grandcanonical ensemble is finite.
\end{proof}

With a new symbol, we can indicate this sigma-algebra as

$$
\mathbb{G}_{N}^B\equiv(K_N,\:\mathcal{G}_{N}^B).
$$

\section{Conclusions}

In statistical mechanics of networks, the set of all graphs with the same number of vertices and number of links varying from zero to the maximum (the support of the grandcanonical distribution, known, in physics, as the \emph{grandanocanical ensemble}) is used as a tool to carry on the calculations of the partition functions. However, little attention has been paid to its mathematical structure. In this paper, an answer from a logic-algebraic point of view has been provided. 

Even if the extension of the logical operations as the union, the intersection and the complement between boolean variables to binary graphs was already known, very little attention has been paid to the role of these operations in characterizing algebraic structures of binary graphs.

By enriching the grandcanonical ensemble with the three operations of \emph{intersection}, \emph{union} and \emph{complement of graphs}, we have found the structure of a \emph{boolean algebra}, that is a set of elements closed under the aforementioned operations and including the neutral elements of the intersection and the union. 

Moreover, by considering the graphs belonging to the grandcanonical ensemble as subsets of the complete graph $K_N$ (which is true for the corresponding link sets) and extending the operation of union also to finite sets of graphs, the grandcanonical ensemble becomes also a \emph{sigma-algebra}: in particular, the so-called \emph{discrete sigma-algebra}, generated by the power-set of $K_N$.

A natural extension of this analysis concerns the weighted networks, on the one side, and the characterization of the microcanonical ensemble, on the other. One of the major challenges about the weighted networks is to find the right generalization of the binary, logic operations to the weighted grandcanonical ensemble; the microcanonical ensemble, instead, forces us to face the need of defining a novel algebraic structure to take into account the hard, topological constraints that are usually imposed over it (as the degree sequence, for example).

\begin{acknowledgments}
T. S. acknowledges support from an ERC Advanced Investigator Grant.
\end{acknowledgments}

\end{document}